\documentclass[letterpaper]{article}
\usepackage{uai2019}
\usepackage[margin=1in]{geometry}

\usepackage{times}

\usepackage[utf8]{inputenc}
\usepackage{microtype}
\usepackage{graphicx}
\usepackage{subfigure}
\usepackage{booktabs} 
\usepackage{amsfonts}
\usepackage{tikz}
\usetikzlibrary{fit,positioning}
\usepackage{url}
\usepackage{stmaryrd}
\usepackage{xcolor}
\usepackage{amsmath}
\usepackage{bbm}
\usepackage{amsthm}
\usepackage{dsfont}
\usepackage{bbold}
\usepackage{makecell}
\usepackage{multirow}

\usepackage{natbib}

\newtheorem{prop}{Proposition}
\def\bal#1{\begin{align}#1\end{align}}

\DeclareMathOperator*{\argmax}{argmax}
\newcommand{\E}[1]{\langle#1\rangle}
\newcommand{\Eq}[1]{\langle#1\rangle_q}

\newcommand{\ve}[1]{ {\mathbf{#1}} }

\newcommand{\oli}[1]{\textcolor{red}{#1}}

\newcommand{\W}{\ve{W}}
\renewcommand{\H}{\ve{H}}
\newcommand{\Y}{\ve{Y}}
\newcommand{\C}{\ve{C}}
\renewcommand{\c}{\ve{c}}
\newcommand{\Z}{\ve{Z}}
\newcommand{\N}{\ve{N}}

\title{Recommendation from Raw Data with Adaptive Compound Poisson Factorization}

\author{
Olivier Gouvert,  Thomas Oberlin, Cédric Févotte \\
IRIT, Université de Toulouse, CNRS, France \\
firstname.lastname@irit.fr
}

\begin{document}

\maketitle

\begin{abstract}
Count data are often used in recommender systems: they are widespread (song play counts, product purchases, clicks on web pages) and can reveal user preference without any explicit rating from the user. Such data are known to be sparse, over-dispersed and bursty, which makes their direct use in recommender systems challenging, often leading to pre-processing steps such as binarization. The aim of this paper is to build recommender systems from these raw data, by means of the recently proposed compound Poisson Factorization (cPF). The paper contributions are three-fold: we present a unified framework for discrete data (dcPF), leading to an adaptive and scalable algorithm; we show that our framework achieves a trade-off between Poisson Factorization (PF) applied to raw and binarized data; we study four specific instances that are relevant to recommendation and exhibit new links with combinatorics. Experiments with three different datasets show that dcPF is able to effectively adjust to over-dispersion, leading to better recommendation scores when compared with PF on either raw or binarized data.
\end{abstract}

\section{INTRODUCTION}
Collaborative filtering (CF) techniques have been achieving state-of-the-art performances in recommendation tasks since the Netflix prize \citep{bennett_netflix_2007}. CF is based on feedbacks of users interacting with items. These data can either be explicit (ratings, thumbs up/down) or implicit (number of times a user listened to a song, number of clicks on web pages). In particular, historical data are easy to collect and often in the form of count data. They can be stored in a sparse matrix $\Y$ of size $U\times I$, where each entry of the matrix $y_{ui}$ is the number of times the user $u\in\{1,\dotsc,U\}$ interacts with the item $i\in\{1,\dotsc,I\}$. For the rest of the paper, we will consider the example of users listening to songs without loss of generality.

Matrix factorization (MF) allows to make recommendations using these feedback data \citep{koren_matrix_2009}. The aim of MF is to infer a low-rank approximation of the observations: $\Y \approx \W\H^T$, where $\W$ of size $U\times K$ represents the preferences of users, and $\H$ of size $I\times K$ represents the attributes of items, with $K\ll \min(U,I)$. Therefore, each user or item is represented in the same latent space  by a vector of $K$ latent components. The strength of an interaction between a user and an item is measured by the dot product between their representative latent vectors. 
Among the methods based on MF \citep{lee_learning_1999,lee_algorithms_2001,hu_collaborative_2008,ma_probabilistic_2011,fevotte_algorithms_2011,liang_modeling_2016}, Poisson factorization (PF) \citep{canny_gap:_2004,cemgil_bayesian_2009,gopalan_scalable_2013} has become very popular in CF when using implicit feedbacks. Indeed, PF posits that the data are generated from a Poisson distribution, making it well-adapted for count data. PF has reached state-of-the-art results while having favorable properties. 
(i)~PF down-weighs the effect of the zeros present in the data, by implicitly assuming that the users have a limited budget to distribute among the items \citep{gopalan_scalable_2013}.
(ii)~Algorithms for PF scale with the number of non-zero values in the data, leading to fast inference \citep{cemgil_bayesian_2009}. 
Many variants of PF have been proposed these last years. Hierarchical structures on the latent variables have been explored \citep{ranganath_deep_2015,zhou_poisson_2015,liang_variational_2018}. 
Other works have proposed to use additional information in the model to perform hybrid CF approaches \citep{gopalan_content-based_2014,lu_learning_2018,salah_2018}.

However, in many cases, count data are over-dispersed and bursty \citep{kleinberg_bursty_2003,schein_poisson-gamma_2016}. The Poisson distribution fails to fully describe such data. Its modeling capacities are indeed limited since its mean and variance are equal. To avoid this problem, it is of common use to work with binarized data \citep{gopalan_scalable_2013,liang_modeling_2016}. This pre-processing step is effective in practice but removes the information contained in the non-zero values. 
Recent works have focused on directly using the raw data in order to achieve better representation and recommendation results. In \citep{hu_collaborative_2008,pan_one-class_2008}, the raw data are introduced as weights (confidence), which regularize the MF approximation. Other works try to find generative processes which are able to deal with over-dispersed data. In \citep{zhou_nonparametric_2017}, the author makes use of the negative binomial (NB) distribution, which is a well-known extension of the Poisson distribution \citep{lawless_negative_1987}. He exploits the compound Poisson (cP) representation of the NB distribution to preserve the scalability property of the proposed algorithm. cP structure has further been used in \citep{simsekli_learning_2013,basbug_hierarchical_2016} to model continuous or discrete sparse data, showing an improved description of the non-zero values. 

In this paper, we present novel contributions to discrete compound Poisson factorization (dcPF). dcPF refers to compound Poisson factorization (cPF), as introduced by \citep{basbug_hierarchical_2016}, for discrete data. dcPF posits that the listening counts can be grouped in listening sessions which are somewhat more informative for recommendation. It uses the concept of self-excitation \citep{du_time-sensitive_2015,hosseini_recurrent_2017,khodadadi_continuous-time_2017,zhou_nonparametric_2017}, which describes the idea that a user can listen to a song not merely because of his/her attachment to it, but because of a previous interaction. The contributions of the paper are the following:

$\bullet$ We develop a unified framework for dcPF and study four specific distributions to model self-excitation, called element distributions. We exhibit new links between the choice of this distribution and combinatorics. 

$\bullet$ We provide simple conditions to preserve scalability and to obtain closed-form updates for the inference of the posterior. 

$\bullet$ We show that dcPF is a natural generalization of PF by proving that PF applied to raw data and PF applied to binarized data are two limit cases of dcPF. 

$\bullet$ We discuss the choice of the element distribution and in particular consider a new one in the context of compound Poisson models, the shifted negative binomial distribution. We present new methodology for hyper-parameter estimation and report experiments with three datasets.

The paper is organized as follows. In Section~\ref{sec:preliminaries}, we provide preliminary material about PF and exponential dispersion models (EDM). In Section~\ref{sec:bay_dcpf}, we present Bayesian dcPF and give an intuitive interpretation of the model and its properties. Related works are discussed in Section~\ref{sec:related} and  a scalable variational algorithm is developed in Section~\ref{sec:vbem}. In Section~\ref{sec:expe}, we apply the proposed algorithm to recommendation tasks with three datasets. Section~\ref{sec:conc} concludes the paper and discusses possible perspectives.

\section{PRELIMINARIES} \label{sec:preliminaries}
\paragraph{Poisson factorization.}
PF is based on non-negative matrix factorization (NMF) \citep{lee_learning_1999,lee_algorithms_2001}. Each observation is assumed to be drawn from a Poisson distribution:
\bal{
	y_{ui} \sim \operatorname{Poisson}([\W\H^T]_{ui}), \label{eq:PF}
}
with ${y_{ui}\in\mathbb{N}}=\{0,1,\dotsc,+\infty\}$. The preferences $\W$ and the attributes $\H$ are supposed to be non-negative matrices. Non-negativity induces a constructive \emph{part-based} representation of the observations that is central to so-called \emph{topic models}~\citep{lee_learning_1999,blei_latent_2003}.
Bayesian extensions of PF typically impose that each entry of the matrices $\W$ and $\H$ has a gamma prior. The gamma prior\footnote{ We use the following convention for the gamma distribution: ${\mathcal{G}(x;\alpha,\beta)=x^{\alpha-1}e^{-\beta x}\beta^\alpha \Gamma(\alpha)^{-1}}$ where $\alpha$ is the shape parameter and $\beta$ is the rate parameter.} imposes non-negativity and is known to induce sparsity when the shape parameter is lower than one. This is a desirable property in the sense that it implies that users and items are represented by only a few patterns. Moreover, it is conjugate with the Poisson distribution, which proves convenient for variational inference.

PF has been very popular in the last decade because of its scalability with sparse data. Sparsity is very common in recommender systems, since subsets of users usually interact with only subsets of items from a large catalog. 
Using the superposition property of the Poisson distribution, we can augment the model presented in Equation~(\ref{eq:PF}) as follows \citep{cemgil_bayesian_2009,gopalan_scalable_2013}:
\bal{
	&y_{ui} = \sum_k c_{uik};~c_{uik} \sim \operatorname{Poisson}(w_{uk}h_{ik}).
}
The conditional distribution of this new latent variable follows a multinomial distribution: 
${\c_{ui}|y_{ui} \sim \operatorname{Mult}(y_{ui},\boldsymbol{\phi}_{ui})}$, 
where $\boldsymbol{\phi}_{ui}$ is a vector of size $K$ with entries ${\phi_{uik} = \frac{w_{uk}h_{ik}}{[\W\H^T]_{ui}}}$. The latent variable $\c_{ui}$ is central to state-of-the-art PF algorithms. $y_{ui}=0$ implies that $\c_{ui}={\bf 0}_K$, where ${\bf 0}_K$ is a vector of size $K$ full of zeros. As such, the latent variable $\c_{ui}$ only needs to be estimated for the non-zero values of $\Y$, which ensures scalability provided the data is sparse.

One limitation of the Poisson distribution is that its variance is equal to its mean: $var(y_{ui})=\E{y_{ui}}$. This makes it ill-suited for over-dispersed data. Moreover, when working with raw data, it appears that PF does not correctly weigh the observations and is too sensitive to large values. The Poisson distribution, parametrized by only one parameter, thus appears too restrictive to model both sparse and heavy-tailed data. To circumvent these issues, data binarization is often used as pre-processing  \citep{gopalan_scalable_2013,liang_modeling_2016}, with the loss of information it induces. The goal of dcPF studied in this paper is to preserve the data while accounting for sparsity and over-dispersion in the model. When necessary, we denote by $\Y^b$ the corresponding binary version of the observations, where $y^b_{ui}=\mathbb{1}[y_{ui}>0]$. 

\paragraph{Exponential dispersion model.} \label{sec:EDM}
A central element of cP models is the distribution used to model the self-excitation. In this paper, we will assume that it belongs to the well-studied EDM family \citep{jorgensen_properties_1986,jorgensen_exponential_1987}. It is a convenient choice when dealing with cP models \citep{yilmaz_alpha/beta_2012,simsekli_learning_2013,basbug_hierarchical_2016}, as explained next.
Most discrete random variables can be written in the form of a discrete EDM, denoted by $x\sim ED(\theta,\kappa)$, and defined by 
\bal{
	p(x;\theta,\kappa) = \exp(x\theta- \kappa\psi(\theta))h(x,\kappa),~x\in S_{\kappa},
	} 
where $\theta\in\Theta\subset\mathbb{R}$ is called the natural parameter, $\kappa>0$ is called the dispersion parameter, $\psi(\theta)$ is the log-partition function, $h(x,\kappa)$ is the base measure and $S_{\kappa}$ is the support of the distribution, which depends of $\kappa$. The mean of $x$ is given by $\E{x} = \kappa\psi'(\theta)$ and its variance by $var(x)=\kappa \psi''(\theta)$.
One of the most interesting properties of EDM is the property of additivity. If $x_l \sim ED(\theta,\kappa)$ and $y = \sum_{l=1}^{n}x_l$ with $n\in\mathbb{N}$, then $y\sim ED(\theta,n\kappa)$. By convention, we assume that $ED(\theta,0)$ is a Dirac distribution in~$0$.

\section{BAYESIAN DISCRETE COMPOUND POISSON FACTORIZATION} \label{sec:bay_dcpf}

\subsection{MODEL DESCRIPTION} \label{sec:model}

We consider the framework proposed by \citep{basbug_hierarchical_2016}. The generative model of the observations $\Y$ is given by
\bal{
	&w_{uk} \sim \mathcal{G}(\alpha^W,\beta^W_u),~h_{ik} \sim \mathcal{G}(\alpha^H,\beta^H_i), \\
	&n_{ui} \sim \operatorname{Poisson}\left([\W\H^T]_{ui}\right), \\
	&x_{l,ui} \sim ED(\theta,\kappa),~\forall l\in \{1,\dotsc,n_{ui}\}, \\
	&y_{ui} = \sum_{l=1}^{n_{ui}}x_{l,ui}.
}
We here specifically assume that $x_{l,ui}$ is a discrete random variable with support equal to $\mathbb{N}^*=\mathbb{N}\setminus\{0\}$. 
The rate parameters of the gamma priors are treated as deterministic parameters estimated by maximum likelihood (ML). $\boldsymbol{\beta}^W\in\mathbb{R}^U_+$ expresses the activity level of the users and $\boldsymbol{\beta}^H\in\mathbb{R}^I_+$ expresses the popularity of the items. A Bayesian treatment of these parameters is also possible and is considered in \citep{gopalan_scalable_2013}. 
Using the additivity property of EDM, we can easily marginalize the latent variables $x_{l,ui}$, leading to:
	${y_{ui} \sim ED(\theta,n_{ui}\kappa)}$.
Compared to PF, this additional stage in the generative process allows for a flexible description of the observations. There are two additional parameters $\{\theta,\kappa\}$ which control the variance and tail of the distribution.

\paragraph{Interpretation.}
In this paragraph, we will suppose that a user/item pair is fixed. For conciseness, we will omit the corresponding indices $_{ui}$. dcPF introduces new latent variables $n$ and $\{x_{l}\}_l$ for $l\in\{1,\dotsc,n\}$. The latent variable $n$ represents the number of listening sessions the user has had for the song. During each session, indexed by $l$, the user listened to the song a number of times $x_{l}$ which is greater or equal to one. The latent variable $x_l$ models the self-excitation induced by a listening interaction. This concept has been used in \citep{zhou_nonparametric_2017,hosseini_recurrent_2017}. Thus, a user can listen to a song, not merely because he/she likes it, but because of a previous listening/excitation. For example, a user can have a summer crush for a song and may listen to it on repeat. The first listening reflects the interest of the user for this song, whereas the following listenings are the consequence of the first one and reflect a short-term behavior. Therefore these listening counts can be grouped in a few listening sessions that will be more able to represent long-term preferences. 
Finally, the observed variable $y$ is just the aggregation of all the listening counts from all the sessions. The variable $n$ can be viewed as a way to partition the observation $y$ in a smaller number of sessions. This number $n$ better reflects the preferences of the user, since it is deprived of the notion of self-excitation which artificially inflates the number of listening counts. 
In the following, we will denote by $\N\in\mathbb{N}^{U\times I}$ the \emph{exposition matrix} with entries $[\N]_{ui}=n_{ui}$.

\paragraph{Joint log-likelihood.} \label{sec:loglik}
The joint log-likelihood of the observations $\Y$ and of the latent variables $\N$, $\W$ and $\H$ can be written as follows:
\bal{
	\log p(\Y,&\N,\W,\H) = \underbrace{\log p(\Y|\N;\theta,\kappa)}_{\text{Mapping}} \\
	&+ \underbrace{\log p(\N|[\W\H^T])}_{\text{PF structure}}  + \underbrace{\log p(\W,\H)}_{\text{Regularization}}. \notag \label{eq:loglik}
}
We can decompose this log-likelihood into three terms: a probabilistic mapping term corresponding to the compound structure of the observations, a term corresponding to the PF structure on the latent variable $\N$ and a regularization term induced by the gamma priors. Contrary to PF, the factorization is placed on the latent variable $\N$ instead of the data itself, allowing for more flexibility. Going back to our interpretation, this latent variable is more likely to inform on user preference than $\Y$. The mapping term can be viewed as a distortion of the true observations, making them ``more factorizable'' than the raw observations. Therefore, this additional term allows to avoid strong pre-processing stages (such as binarization), letting the data choose their ``own distortion''.

\paragraph{Scalability and tractability.} \label{sec:prop}
By imposing that users will listen to a song at least one time during each session, i.e., $x_{l,{ui}}\in \mathbb{N}^*$, two important properties can be deduced.

First, we have the following equivalence: $y_{ui}=0 \Leftrightarrow n_{ui}=0$. In other words, the observed listening count is equal to zero if and only if the number of listening sessions is equal to zero. Therefore, the latent variable $\N$ is partially known and has the same zeros as $\Y$. Thanks to this, we preserve the scalability property of PF (cf Section~\ref{sec:preliminaries}). Moreover, we have that:
\bal{
	\mathbb{P}(y_{ui}=0)=\mathbb{P}(n_{ui}=0)=e^{-[\W\H^T]_{ui}}.
}
The latent variables $\W$ and $\H$ control the sparsity of the matrix $\Y$, while the element distribution and its parameters $\{\theta,\kappa\}$ only focus on the representation of non-zero values.

The second interesting property is that $n_{ui}\leq y_{ui}$. Therefore, given an observation $y_{ui}$, we know that $n_{ui}$ can only take a finite number of values, bounded by $y_{ui}$ (in particular, $y_{ui}=1 \Rightarrow n_{ui}=1$). This provides efficient means of calculation for the latent variable $\N$ during inference. 

\subsection{EXAMPLES OF ELEMENT DISTRIBUTIONS} \label{sec:examples}

\begin{table*}[ht!] \small
	\caption{Examples of four discrete element distributions. Notation: $\mathbb{R_-^*}=(-\infty;0)$.}
	\begin{center}
	\begin{tabular}{llllllllll}
		\toprule
		Distribution & $\theta$ & $\Theta$ & $\theta^\text{raw}$ & $\theta^\text{bin}$ & $\kappa$ & $\psi(\theta)$ & $h(x,\kappa) $ \\
		\hline
		$x_l \sim \operatorname{Log}(p)$  & $\log(p)$ & $\mathbb{R_-^*}$ & $-\infty$ & 0 & $1$ & $\log(-\log(1-e^\theta))$ & $\frac{x!}{\kappa!}St_1(x,\kappa)$\\
		$x_l \sim \operatorname{ZTP}(p)$ & $\log(p)$ & $\mathbb{R}$ & $-\infty$ & $+\infty$ & $1$ & $\log(e^{e^\theta}-1)$ &  $\frac{x!}{\kappa!}St_2(x,\kappa)$ \\
		$x_l \sim \operatorname{Geo}(1-p)$ & $\log(p)$ & $\mathbb{R_-^*}$ & $-\infty$ & $0$ & $1$ & $\log(\frac{e^{\theta}}{1-e^{\theta}})$ & $\frac{x!}{\kappa!}St_3(x,\kappa)$ \\
		$x_l-1 \sim \operatorname{NB}(a,p)$ & $\log(p)$ & $\mathbb{R_-^*}$ & $-\infty$ & $0$ & $(1,a)^T $ & $(\theta,-\log(1-e^{\theta}))^T$ & $ \frac{\Gamma(x-\kappa_1+\kappa_2)}{\Gamma(x-\kappa_1+1)\Gamma(\kappa_2)}$ \\
		\bottomrule
	\end{tabular}
	\end{center}
	\label{tab:examples}
\end{table*} 

\begin{figure}[t!]
	\centerline{\includegraphics[height=3.3cm]{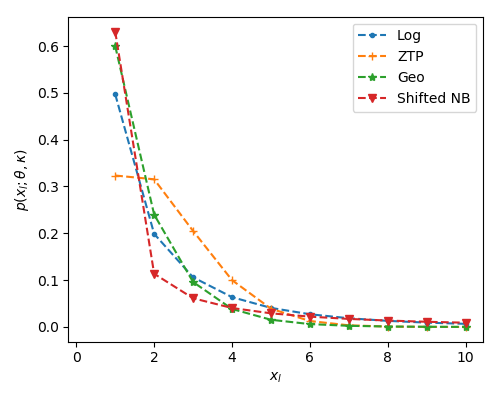} 
	\includegraphics[height=3.3cm]{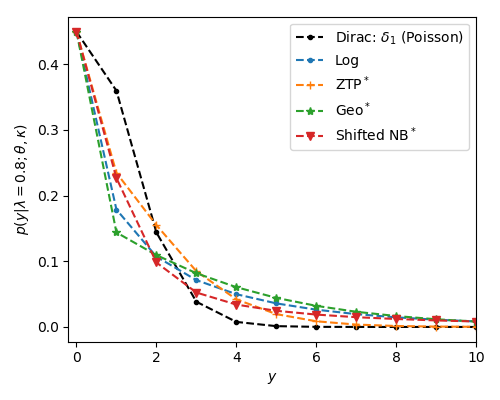}}
	\caption{On the left, the p.m.f. of the four element distributions presented in Section~\ref{sec:examples}. On the right, the p.m.f. of the marginalized distribution of the observations, using the same four element distributions. The values of the natural and dispersion parameters are those presented in Table~\ref{tab:score_tps}. The p.m.f. marked with a $^*$ are not available in closed form and are represented through a histogram of simulated values.}
	\label{fig:pmf}
\end{figure}

\paragraph{Distributions based on Stirling numbers.} \label{sec:examples_stirling}
In this paragraph we focus on three particular distributions: the logarithmic distribution \citep{quenouille_relation_1949}, denoted by $x_l\sim Log(p)$; the zero-truncated Poisson (ZTP) distribution, denoted by $x_l\sim ZTP(p)$; the (shifted) geometric distribution, denoted by $x_l\sim Geo(1-p)$.\footnote{ We use the following convention for the (shifted) geometric distribution: $Geo(x;p)=(1-p)^{x-1}p$, with $x\in\mathbb{N}^*$.} Examples of probabilistic mass functions (p.m.f.) of the four considered element distributions are displayed on Figure~\ref{fig:pmf}.

These three distributions can be written in the form of a discrete EDM with dispersion parameter $\kappa = 1$ and  support  $\mathbb{N}^*$. Their base measure is given by $h(x_l,\kappa) = \frac{x_l!}{\kappa!}St_j(x_l,\kappa)$, where $St_j(x_l,\kappa)$ is the unsigned Stirling number of one of the three kinds ($j\in\{1,2,3\}$), see Table~\ref{tab:examples}. See \citep{johnson_univariate_2005} for more details. It is of particular interest when analyzing the distribution of $y|n$. This conditional distribution is also a discrete EDM with: $h(y,\kappa n) = \frac{y!}{n!} St_j(y,n).$
 
The Stirling numbers of the three kinds are three different ways to partition $y$ elements into $n$ groups \citep{riordan_introduction_2012} (graphical illustrations are given in the supplementary material):

$\bullet$ The Stirling number of the first kind corresponds to the number of ways of partitioning $y$ elements into $n$ disjoints cycles. It can be obtained thanks to a recurrence formula: 
		$St_1(y+1,n) = y~St_1(y,n) + St_1(y,n-1)$.

$\bullet$ The Stirling number of the second kind corresponds to the number of ways of partitioning $y$ elements into $n$ non-empty subsets. It can be calculated in closed form:
		$St_2(y,n) = \frac{1}{n!}\sum_{j=0}^{n} (-1)^{n-j} \binom{n}{j} j^y$.
When $y$ is too large, its exact computation can suffer from numerical issues, though reasonable and stable approximations are available \citep{bleick_asymptotics_1974}.

$\bullet$ The Stirling number of the third kind (also known as Lah number) corresponds to the number of ways of partitioning $y$ elements into $n$ non-empty ordered subsets. It is given by:
		$St_3(y,n) = \binom{y-1}{n-1} \frac{y!}{n!}$.
Its definition is particularly well adapted if we assume that the grouping results from temporal phenomena.

\paragraph{Shifted negative binomial.}
The dispersion parameter for the three distributions presented in the previous examples is fixed and equal to one. We now introduce a new distribution, referred to as \emph{shifted NB} distribution which is parametrized by two parameters: $x_l-1 \sim NB(a,p)$, whose shape parameter $a$ controls the long tail of the distribution, and $p\in (0,1)$ is the probability parameter. The shifted NB is a shifted EDM, which does not exactly fall into the EDM family. However, the conditional distribution $y|n$ can still be written as: $p(y|n;\theta,\kappa) = \exp(y\theta- n\kappa^T\psi(\theta))h(y,n\kappa)$, where $y\in \{n,\dotsc,+\infty\}$, $\kappa = (\kappa_1,\kappa_2)^T= (1,a)^T$ and $\psi(\theta)=(\theta,-\log(1-e^{\theta}))^T$. Note that $\kappa$ and $\psi(\theta)$ are now vectors of dimension 2. Parameter $\kappa_1$ controls the shifting operation, and is fixed to one to ensure that the support of $x_l$ is $\mathbb{N}^*$. Shifted NB encompasses two particular cases: the classical NB distribution ($\kappa_1=0$), and the geometric distribution ($\kappa_1=\kappa_2=1$). 

For each distribution, the resulting marginalized distribution of $y$ is displayed on Figure~\ref{fig:pmf} and is compared to the Poisson distribution (which is a cP distribution with a Dirac as element distribution). We can see on this figure that all marginalized distributions have the same mass in $0$ (see Section~\ref{sec:prop}) but are different otherwise.

\subsection{A TRADE-OFF BETWEEN RAW AND BINARIZED DATA } \label{sec:tradeoff}

In this section, we show that dcPF generalizes PF in the sense that \oli{it} includes PF applied to raw and binarized data as limit cases. For a given dispersion parameter $\kappa$, the natural parameter $\theta$ controls the level of information contained in the observations $\Y$:

$\bullet$ When $\theta$ tends to a limit $\theta^\text{raw}$, dcPF becomes equivalent to PF (with original raw data).

$\bullet$ When $\theta$ tends to a limit $\theta^\text{bin}$, the posterior inference of dcPF becomes equivalent to the posterior inference of PF applied to the binarized data. In other words, performing dcPF (with original raw data) becomes equivalent to performing PF on binarized data. Note that the marginal distribution of the observations and the distribution of $y|n$ are both improper distributions, but the posterior distribution is still well-defined~\citep{robert_bayesian_2007}.

$\bullet$ Between $\theta^\text{raw}$ and $\theta^\text{bin}$, $\theta$ controls the degree of \emph{implicit distortion} of the observations.

Our results are formalized in the two following propositions. The proofs are left to the supplementary material.
\begin{prop}
	If there exists $\theta^\text{raw}$ such that $ \lim_{\theta\to\theta^\text{raw}}\kappa^T \psi(\theta) = -\infty$, then the posterior of dcPF tends to the posterior of PF as $\theta$ goes to $\theta^\text{raw}$.
\end{prop}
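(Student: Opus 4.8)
The plan is to reduce the claim to the convergence of the \emph{marginal} likelihood of $(\W,\H)$ obtained after summing out the exposition matrix $\N$, and then to argue that this marginal converges, up to a factor independent of $(\W,\H)$, to the Poisson likelihood of PF on the raw data. First I would observe that both the dcPF and the PF posteriors factorize over the entries $(u,i)$ conditionally on $\W,\H$, so it suffices to study a single entry; write $\lambda=[\W\H^T]_{ui}$ and $y=y_{ui}$, and recall from Section~\ref{sec:prop} that $n_{ui}\le y_{ui}$ and $y=0\Leftrightarrow n=0$ (so the case $y=0$ already matches PF, with mass $e^{-\lambda}$). For $y\ge 1$, substituting the EDM form $p(y\mid n)=\exp(y\theta-n\kappa^{T}\psi(\theta))\,h(y,n\kappa)$ and the Poisson law of $n$ gives
\begin{align*}
p(y\mid\lambda)&=\sum_{n=0}^{y}p(y\mid n)\,\operatorname{Poisson}(n;\lambda)\\
&=e^{y\theta}e^{-\lambda}\sum_{n=1}^{y}\mu^{n}\,\frac{h(y,n\kappa)}{n!},\qquad \mu:=\lambda\,e^{-\kappa^{T}\psi(\theta)} .
\end{align*}

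The hypothesis $\kappa^{T}\psi(\theta)\to-\infty$ forces $\mu\to\infty$, so the highest admissible power $n=y$ dominates the finite sum. Making this precise, I would divide by the constant $C_{y}:=e^{y\theta-y\kappa^{T}\psi(\theta)}$, which depends on $y$ (the data) but \emph{not} on $\W,\H$, and write
\begin{equation*}
\frac{p(y\mid\lambda)}{C_{y}}=e^{-\lambda}\sum_{n=1}^{y}\lambda^{n}\,e^{(y-n)\kappa^{T}\psi(\theta)}\,\frac{h(y,n\kappa)}{n!}\;\xrightarrow[\theta\to\theta^{\mathrm{raw}}]{}\;e^{-\lambda}\,\frac{\lambda^{y}}{y!}\,h(y,y\kappa),
\end{equation*}
since every factor $e^{(y-n)\kappa^{T}\psi(\theta)}$ with $n<y$ tends to $0$ while the $n=y$ term is left untouched. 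Because $n=y$ corresponds to the unique admissible configuration $x_{1}=\dots=x_{y}=1$, the boundary base measure satisfies $h(y,y\kappa)>0$, so up to the $\W,\H$-independent positive constant $C_{y}\,h(y,y\kappa)$ the limit of $p(y\mid\lambda)$ is exactly the PF kernel $e^{-\lambda}\lambda^{y}/y!$. This also makes transparent the interpretation in Section~\ref{sec:prop}: the conditional posterior $p(n\mid y,\lambda)\propto\mu^{n}h(y,n\kappa)/n!$ concentrates on $n=y$, i.e.\ $\N\to\Y$, and dcPF collapses to a factorization placed directly on the data.

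To conclude, I would take the product over all entries: the overall constant $\prod_{ui}C_{y_{ui}}h(y_{ui},y_{ui}\kappa)$ appears identically in the numerator and the normalizing denominator of the posterior and therefore cancels, leaving the pointwise limit $p_{\mathrm{dcPF}}(\W,\H\mid\Y)\to p_{\mathrm{PF}}(\W,\H\mid\Y)$ as densities on $(\W,\H)$. The main technical obstacle is justifying the exchange of the limit $\theta\to\theta^{\mathrm{raw}}$ with the integral over $(\W,\H)$ that normalizes the posterior. I would handle this by dominated convergence: for $\theta$ close enough to $\theta^{\mathrm{raw}}$ one has $\kappa^{T}\psi(\theta)<0$, hence $e^{(y-n)\kappa^{T}\psi(\theta)}\le 1$, so $p(y\mid\lambda)/C_{y}\le e^{-\lambda}\sum_{n=1}^{y}\lambda^{n}h(y,n\kappa)/n!$, which is uniformly bounded (each $e^{-\lambda}\lambda^{n}$ is bounded) and integrable against the proper gamma prior $p(\W,\H)$. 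Pointwise convergence of the normalized posterior densities together with Scheffé's lemma then upgrades the result to convergence in total variation. The only remaining points to check are that $C_{y}$ is finite and strictly positive for $\theta$ near the limit and that $h(y,y\kappa)>0$, both of which follow directly from the model assumption $x_{l}\in\mathbb{N}^{*}$.
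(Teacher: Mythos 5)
Your proof is correct, and it takes a genuinely different (though closely related) route from the paper's. The paper works with the conditional posterior of the exposition variable: it shows that for $y>0$ the distribution $p(n\mid y)\propto r^{n}h(y,n\kappa)/n!$ with $r=\lambda e^{-\kappa^{T}\psi(\theta)}$ collapses to $\delta_{y}(n)$ as $r\to+\infty$, deduces $p(\N\mid\Y)\to\delta_{\Y}(\N)$, and then concludes via $p(\W,\H\mid\Y)=\int p(\W,\H\mid\N)\,p(\N\mid\Y)\,d\N\to p(\W,\H\mid\N=\Y)$. You instead renormalize the marginal likelihood $p(y\mid\lambda)$ by the $(\W,\H)$-independent constant $C_{y}$, show it converges to the Poisson kernel $e^{-\lambda}\lambda^{y}/y!$ (times $h(y,y\kappa)>0$, which also cancels), and pass to the posterior by dominated convergence and Scheff\'e. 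The two arguments share the same computational kernel, namely the dominance of the $n=y$ term in $\sum_{n}r^{n}h(y,n\kappa)/n!$ as $\kappa^{T}\psi(\theta)\to-\infty$, but your route is more explicit about the one step the paper glosses over: the exchange of the limit with the integration over $(\W,\H)$ (the paper pushes the limit inside $\int p(\N\mid\Y,\W,\H)\,p(\W,\H\mid\Y)\,d\W\,d\H$ even though $p(\W,\H\mid\Y)$ itself depends on $\theta$). Your bound $e^{(y-n)\kappa^{T}\psi(\theta)}\le 1$ for $\kappa^{T}\psi(\theta)<0$ and $n\le y$ supplies exactly the domination needed to repair that step too. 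What the paper's formulation buys in exchange is the explicit intermediate statement that $\N$ concentrates on $\Y$, which carries the model interpretation, and a single asymptotic computation that handles both limit regimes ($r\to+\infty$ and $r\to 0$) symmetrically.
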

\begin{prop}
	If there exists $ \theta^\text{bin}$ such that $\lim_{\theta\to\theta^\text{bin}}\kappa^T \psi(\theta) = +\infty$, then the posterior of dcPF tends to the posterior of PF applied to binarized data as $\theta$ goes to $\theta^\text{bin}$, i.e.: 
		${\lim_{\theta\to\theta^\text{bin}} p(\W,\H|\Y) = p(\W,\H|\N=\Y^b)}$.
\end{prop}

The four distributions described in Section~\ref{sec:examples} respect the assumptions of both propositions. The limit cases of the natural parameter $\theta$ are given in Table~\ref{tab:examples}. 

It is of particular interest to learn the natural parameter $\theta$ since its choice characterizes the data. If $\theta$ is close to $\theta^\text{raw}$, the observations do not need to be distorted and PF on raw data is effective. If $\theta$ is close to $\theta^\text{bin}$, the non-zero observations of $\Y$ are non-informative and binarization is welcome. In between these extremes, dcPF takes full power and acts as an implicit distortion. Thus, the value of $\theta$ gives an indication on the gain brought by dcPF as compared to PF.

\section{RELATED WORKS} \label{sec:related}
\paragraph{Negative binomial factorization.}
An extension of the Poisson distribution known to model over-dispersion is the NB distribution. The NB distribution depends on two parameters: a shape parameter and a probability parameter $p$. In \citep{zhou_nonparametric_2017}, the author introduces NB matrix factorization, in which he posits that the shape parameter is low-rank, i.e.: $y_{ui} \sim NB([\W\H^T]_{ui},p)$.
To preserve scalability of the proposed Gibbs algorithm, the author uses the cP representation of the NB \citep{quenouille_relation_1949,fisher_relation_1943}: 
	$n_{ui} \sim \operatorname{Poisson}(-[\W\H^T]_{ui}\log(1-p))$ 
	and $y_{ui} \sim \operatorname{SumLog}(n_{ui},p)$
where $\operatorname{SumLog}(n,p)$ is the sum of $n$ identical and independent logarithmic distributions. In this case, the conditional distribution of the $n_{ui}$ is also known: $n_{ui}|y_{ui} \sim CRT(y_{ui},[\W\H^T]_{ui})$, where CRT is the number of opened tables in a Chinese restaurant process (CRP). 
An important difference with our framework is that the parameter $p$ there controls both the sparsity of $\Y$ and the distribution of the non-zero values. This introduces a coupling between the factorization $\W\H^T$ and the parameter $p$ which leads to a more difficult interpretation in the context of recommendation.

\paragraph{Compound Poisson models.}
In \citep{basbug_hierarchical_2016}, the authors introduce cPF which is well-adapted for continuous or discrete sparse data. For discrete data, the authors present four different distributions but only one (the ZTP distribution) with support $\mathbb{N}^*$. Note that, if $\mathbb{P}(x_{l,ui}=0)>0$ then the latent variable $\N$ is completely unknown and the scalability property does not hold anymore (unless the hypothesis $y_{ui}=0\Rightarrow n_{ui}=0$ is arbitrarily imposed during the inference). In terms of inference, \citep{basbug_hierarchical_2016} describe a stochastic variational inference algorithm that is shown to perform well in terms of log-likelihood computed from held-out data. We will instead evaluate the performance of dcPF with recommendation metrics.

In \citep{yilmaz_alpha/beta_2012,simsekli_learning_2013}, a cP structure with a gamma element distribution is used to represent the Tweedie distribution. The Tweedie distribution is the distribution induced by the $\beta$-divergence with $\beta\in {(01)}$ \citep{fevotte_algorithms_2011}. One importance difference with our setting, besides the fact that the Tweedie distribution is continuous, is that the authors impose that the model is mean-parametrized, i.e., $\E{y_{ui}} = [\W\H^T]_{ui}$. This is not the case with cPF since by construction: $\E{y_{ui}}\geq \E{n_{ui}} = [\W\H^T]_{ui}$.

\paragraph{Weighted MF.} 
In \citep{hu_collaborative_2008}, the authors develop a framework for implicit feedbacks. Implicit data are inherently noisy and may not reflect a direct preference of a user for an item, but rather a confidence in the interaction. In this context, implicit feedbacks can be transformed and incorporated as weights in the cost function which is defined as:
\bal{
	C(\W,\H) = \sum_{ui} \omega_{ui} \|y^b_{ui}-[\W\H^T]_{ui} \|_2^2 + \mu R(\W,\H), \label{eq:weight} \notag
}
where $\omega_{ui}=f(y_{ui})$ is the confidence that can be brought in the binary observation $y^b_{ui}$, $f$ is a fixed mapping function, $R(\W,\H)$ is a regularization term and $\mu$ is an hyper-parameter. Here, the mapping function $f$ between the raw data and the confidence is deterministic. Note that some other works focused on introducing probabilistic weights in the data fitting term \citep{liang_modeling_2016,wang_general_2015} but these weights are learned regardless of the raw data.
As discussed previously, dcPF encompasses the raw observations via an additional probabilistic mapping term. This term can also be viewed as a probabilistic confidence term, combining the two latter approaches. Indeed, large listening counts $y_{ui}$ will often lead to a large number of sessions $n_{ui}$, exhibiting a strong confidence in this observation. Nevertheless, this mapping is not deterministic and as such more flexible and robust.

\section{VARIATIONAL BAYES EXPECTATION-MAXIMIZATION} \label{sec:vbem}

In this section, we develop a variational Bayes expectation-maximization (VBEM) algorithm. We denote by ${\Z=\{\N,\C,\W,\H\}}$ the set of latent variables and by $\Phi=\Phi_1\cup\Phi_2$ the set of parameters, with $\Phi_1=\{\theta,\kappa\}$ and ${\Phi_2=\{\alpha^W,\boldsymbol{\beta}^W,\alpha^H,\boldsymbol{\beta}^H\}}$. The aim of this algorithm is to estimate both the posterior $p(\Z|\Y;\Phi)$ and the parameters~$\Phi$. 

\subsection{VARIATIONAL INFERENCE}

Bayesian inference revolves around the characterization of the posterior distribution $p(\Z|\Y;\Phi)$. Unfortunately, this posterior is intractable in our case. Variational inference (VI) \citep{jordan_introduction_1999,blei_variational_2017} consists in approximating this intractable posterior by a simpler distribution $q$ parametrized by its own parameters $\tilde\Phi$, called variational parameters. Thus, the aim of VI is to minimize the Kullback-Leibler divergence between the true and approximate distributions with respect to (w.r.t.) the variational parameters. In practice, it is simpler to maximize the so-called expected lower bound (ELBO), which is an equivalent problem.
A common choice is to assume $q$ to be factorizable (mean-field approximation):
\bal{
	q(\Z) = \prod_{ui} q(n_{ui},\c_{ui}) \prod_{uk}q(w_{uk})\prod_{ik}q(h_{ik}).
}
Though not explicitly shown for conciseness, the variational distribution of each parameter is governed by its own set of parameters (over which optimization takes place). Note that we choose the latent variables $n_{ui}$ and $\c_{ui}$ to remain coupled. We can further decompose the variational distribution of these variables as: $q(n_{ui},\c_{ui}) = q(\c_{ui}|n_{ui})q(n_{ui})$. 

The ELBO can be calculated as follow:
\bal{
	\mathcal{L}(q,\Phi) =&  \Eq{\log p(\Y|\N;\Phi_1)} 
	+ \Eq{\log p(\N,\C|\W,\H)} \notag\\
	& + \Eq{\log p(\W,\H;\Phi_2)} + \mathcal{H}(q),
}
where $\Eq{x}$ is the expectation of the variable $x$ w.r.t. the variational distribution $q$ and $\mathcal{H}(q)$ is the entropy of the distribution $q$.

\paragraph{Coordinate ascent VI.} \label{CAVI}
We use a coordinate ascent for VI (CAVI) algorithm to maximize the ELBO. The CAVI algorithm consists of sequentially optimizing each of the variational parameters while keeping the others fixed. It can be shown that mean-field variational inference naturally leads to the following choice of variational distributions \citep{bishop_pattern_2006}, parametrized by~${\tilde \Phi = \{\boldsymbol{\Lambda},\tilde{\boldsymbol{\alpha}}^W,\tilde{\boldsymbol{\beta}}^W,\tilde{\boldsymbol{\alpha}}^H,\tilde{\boldsymbol{\beta}}^H\}}$:
\bal{
	&q(w_{uk}) = \mathcal{G}(\tilde\alpha^W_{uk},\tilde\beta^W_{uk}),~q(h_{ik}) = \mathcal{G}(\tilde\alpha^H_{ik},\tilde\beta^H_{ik}), \\
	&q(\c_{ui}|n_{ui}) = \operatorname{Mult}\left(n_{ui},\left\{\frac{\Lambda_{uik}}{\Lambda_{ui}}\right\}_k\right), \notag\\
	&q(n_{ui}=n) = \frac{1}{Z_{ui}} (r_{ui})^n \frac{h(y_{ui},n\kappa)}{n!},\forall n\in\{1,\dotsc,y_{ui}\}, \notag
}
where $\Lambda_{ui}=\sum_k \Lambda_{uik}$, $r_{ui} = \Lambda_{ui}e^{-\kappa\psi(\theta)}$ and $Z_{ui} = \sum_{n=1}^{y_{ui}} (r_{ui})^n \frac{h(y_{ui},n\kappa)}{n!}$ is a normalization constant. 

\paragraph{Update rules.} CAVI leads to the following set of iterative update rules:
\bal{
	&\Lambda_{uik} \leftarrow \exp\left(\Eq{\log w_{uk}} + \Eq{\log h_{ik}}\right); \\
	&\tilde\alpha^W_{uk} \leftarrow \alpha^W + \sum_i \Eq{n_{ui}}\dfrac{\Lambda_{uik}}{\Lambda_{ui}}; \tilde\beta^W_{uk} \leftarrow \beta^W_u + \sum_i \Eq{h_{ik}} \notag\\
	&\tilde\alpha^H_{ik} \leftarrow \alpha^H + \sum_u \Eq{n_{ui}}\dfrac{\Lambda_{uik}}{\Lambda_{ui}}; \tilde\beta^H_{ik} \leftarrow \beta^H_i + \sum_u \Eq{w_{uk}}. \notag
}
When $x\sim \mathcal{G}(\alpha,\beta)$, $\E{x} = \frac{\alpha}{\beta}$ and $\E{\log x} = \Psi(\alpha) - \log\beta$,  where $\Psi$ is the digamma function. The statistic $\Eq{n_{ui}}$ is available in closed form\footnote{When choosing the logarithmic distribution as the element distribution we have: ${\Eq{n_{ui}} = r_{ui}\big(\Psi(y_{ui}+r_{ui}) - \Psi(r_{ui})\big)}$ \citep{zhou_nonparametric_2017}.} 
thanks to the properties of Section~\ref{sec:prop}. If $y_{ui}=0$ then $\Eq{n_{ui}} = 0$, otherwise $\Eq{n_{ui}} = \sum_{n=1}^{y_{ui}} n~q(n_{ui}=n)$.

As expected, we recover, as limit cases, the algorithms for PF \citep{gopalan_scalable_2013} applied to raw data if ${\Eq{n_{ui}}=y_{ui}}$ and to binarized data if ${\Eq{n_{ui}}={\mathbb{1}[y_{ui}>0]}}$. The algorithm is stopped when the relative increment of the ELBO gets lower than a value $\tau$.

\subsection{PARAMETERS ESTIMATION}
\paragraph{Activity and popularity parameters.}
Optimizing the ELBO w.r.t. the parameters $\Phi_2$ is equivalent to solving the sub-problem: $\argmax_{\Phi_2} \Eq{\log p(\W,\H;\Phi_2)}$. In this article, we suppose that the shape parameters $\{\alpha^W,\alpha^H\}$ are known and we want to optimize only the rate parameters $\{\boldsymbol{\beta}^W,\boldsymbol{\beta}^H\}$. The interested reader is referred to \citep{cemgil_bayesian_2009} and \citep{zhou_negative_2015} for the details of ML and Bayesian estimation of the shape parameter of a gamma distribution. ML leads to the following updates for both activity and popularity parameters: 
\bal{
	\beta^W_u \leftarrow \dfrac{\sum_k \Eq{w_{uk}}}{K\alpha^W };~
	\beta^H_i \leftarrow \dfrac{\sum_k \Eq{h_{ik}}}{K\alpha^H }.
}
\paragraph{Natural parameter.} \label{sec:nat_param_est}
Optimizing the ELBO w.r.t. the natural parameter $\theta$ is equivalent to maximizing: 
	$\Eq{\log p(\Y|\N;\Phi_1)} = \sum_{ui} \left( y_{ui}\theta- \Eq{n_{ui}}\kappa\psi(\theta) \right) + cst, $
where $cst$ is a constant w.r.t. the natural parameter $\theta$. It leads to the following equation:
\bal{
	\sum_{ui} y_{ui} - \sum_{ui} \Eq{n_{ui}} \kappa\psi'(\theta) = 0.
}
In the case of the shifted NB distribution, $\psi(\theta)\in\mathbb{R}^2$ and $\psi'$ corresponds to its gradient. The solution of this equation in known in closed form for geometric and shifted NB distributions. 
We implement a Newton-Raphson algorithm to solve it for logarithmic and ZTP distributions.

\paragraph{Dispersion parameter for shifted NB.}
When choosing the shifted NB as the element distribution, we have:
\bal{y_{ui}-n_{ui}\sim NB(\kappa_2 n_{ui},e^\theta). }
Optimizing the ELBO w.r.t. the  parameter $\kappa_2$  which controls the long-tail of the NB distribution is not straightforward. The main issue is that it involves a term of the form $\Eq{h(y_{ui},n_{ui}\kappa)}$ that is computationally expensive to optimize.
Therefore, we augment the model like in \citep{zhou_nonparametric_2017}, with a latent variable: 
$m_{ui}|y_{ui},n_{ui}\sim CRT(y_{ui}-n_{ui},n_{ui}\kappa_2)$ (if $y_{ui}=0$ then $m_{ui}=0$).
In this augmented model, the optimization of $\kappa_2$ is equivalent to finding the ML estimator of $m_{ui}|n_{ui} \sim \operatorname{Poisson} (n_{ui}\kappa_2(-\log(1-e^\theta))$. This leads to the following update:
\bal{
	& \kappa_2 \leftarrow \dfrac{1}{-\log (1-e^\theta)} \dfrac{\sum_{ui}\Eq{m_{ui}}}{\sum_{ui}\Eq{n_{ui}}},
}
where $\Eq{m_{ui}} = \sum_{n=1}^{y_{ui}} n\kappa_2 \big( \Psi(y_{ui} - n + n\kappa_2) - \Psi(n\kappa_2)\big)~q(n_{ui}=n)$.

\section{EXPERIMENTAL RESULTS} \label{sec:expe}

\paragraph{Datasets.}
We consider the following datasets, whose structure is summarized in Table~\ref{tab:datasets}.

$\bullet$ The Taste Profile dataset \citep{bertin-mahieux_million_2011} provided by The Echo Nest contains the listening history of users. The data collected are the number of time the users listened to songs. We pre-process a subset of the data as in \citep{liang_modeling_2016}, keeping only users and songs that have more than 20 interactions. The histogram of the listening counts is displayed on Figure~\ref{fig:PPC}.

$\bullet$ The Last.fm dataset \citep{celma_music_2010} contains the listening history of users with additional timestamps information. We select play counts of the year 2008 and apply the same pre-processing as with the Taste Profile dataset.

$\bullet$ The NIPS dataset \citep{perrone_poisson_2016} contains bag-of-words representations of conference papers published in the NIPS conference from 1987 to 2015. We make an analogy between ``users who listened to songs'' and ``documents written with words''. The goal here is to recommend unused words to the author of a paper. 

\begin{table}[t] \footnotesize
 \caption{Datasets structure after pre-processing.}
 \begin{center}
 \begin{tabular}{llll} 
  \toprule
   & \textbf{Taste Profile} & \textbf{NIPS}  & \textbf{Last.fm} \\
  \midrule
  $\#$ rows 		& $16,301$		& $5,811$ 		& $781$ 	\\
  $\#$ columns	 	& $12,118$ 		& $11,463$ 		& $11,172$ 	\\
  $\#$ non-zeros 	& $1,176,086$ 	& $4,033,830$ 	& $402,058$ \\
  $\%$ non-zeros 	& $0.60\%$ 		& $6.06\%$ 		& $4.61\%$ \\
  \bottomrule
 \end{tabular}
 \end{center}
 \label{tab:datasets}
\end{table}

\paragraph{Experimental setup.}
\begin{table*}[ht!] \scriptsize
\caption{Recommendation performance of dcPF and PF using the Taste Profile dataset. Italic: scores of dcPF when using a grid-search for $\theta$, see text for details. Bold: two best NDCG scores (grid-search excluded).}
\begin{center}
\begin{tabular}{llllllll}
		\toprule
		\bf Model & 
			\bf Est. &
			$p=e^\theta$ &
			$\kappa$ & 
			$\bf NDCG0$ & 
			$\bf NDCG1$ &
			$\bf NDCG2$ &
			$\bf NDCG5$ \\
		\hline
		\multirow{2}{*}{\bf Log} & 
			VBEM&
			$0.803$& 
			$1$ & 
			$\bf 0.200~(\pm 3.0~10^{-3} )$& 
			$\bf 0.182~(\pm 2.3~10^{-3} )$&
			$\bf 0.166~(\pm 2.0~10^{-3} )$&
			$0.147~(\pm 1.5~10^{-3} )$\\
		& 
			\it Grid&
			$\it 0.3$& 
			$1$ & 
			$\it 0.200~(\pm 4.1~10^{-3})$ & 
			$\it 0.186~(\pm 3.9~10^{-3})$ &
			$\it 0.173~(\pm 3.7~10^{-3})$ &
			$\it 0.158~(\pm 3.7~10^{-3})$ \\
		\multirow{2}{*}{\bf ZTP} & 
			VBEM&
			$1.950$& 
			$1$ & 
			$0.192~(\pm 4.1~10^{-3})$& 
			$0.178~(\pm 3.7~10^{-3})$&
			$\bf 0.167~(\pm 3.6~10^{-3})$&
			$\bf 0.156~(\pm 3.8~10^{-3})$\\
		& 
			\it Grid&
			$\it 1$& 
			$1$ & 
			$\it 0.190~(\pm 3.5~10^{-3})$ & 
			$\it 0.178~(\pm 3.0~10^{-3})$ &
			$\it 0.168~(\pm 3.1~10^{-3})$ &
			$\it 0.158~(\pm 3.3~10^{-3})$ \\
		\multirow{2}{*}{\bf Geo} & 
			VBEM&
			$0.600$& 
			$1$ & 
			$0.199~(\pm 2.3~10^{-3})$& 
			$\bf 0.182~(\pm 1.8~10^{-3})$&
			$\bf 0.167~(\pm 1.8~10^{-3})$&
			$\bf 0.150~(\pm 1.2~10^{-3})$\\
		& 
			\it Grid&
			$\it 0.3$ & 
			$1$ & 
			$\it 0.199~(\pm 4.9~10^{-3})$ & 
			$\it 0.185~(\pm 4.6~10^{-3})$ &
			$\it 0.172~(\pm 4.2~10^{-3})$ &
			$\it 0.159~(\pm 4.0~10^{-3})$ \\
		\multirow{1}{*}{\bf Sh. NB} & 
			VBEM&
			$0.873$& 
			$(1,0.2)^T$& 
			$\bf 0.201~(\pm 3.1~10^{-3})$& 
			$\bf 0.183~(\pm 2.5~10^{-3})$&
			$\bf 0.166~(\pm 2.2~10^{-3})$&
			$0.147~(\pm 1.5~10^{-3})$\\
		\hline
		\multirow{1}{*}{\bf PFraw} & 
			. &
			. & 
			. & 
			$0.156~(\pm 3.0~10^{-3})$& 
			$0.155~(\pm 3.3~10^{-3})$&
			$0.150~(\pm 3.5~10^{-3})$&
			$0.144~(\pm 5.3~10^{-3})$\\
		\multirow{1}{*}{\bf PFbin} & 
			. &
			. & 
			. & 
			$0.197~(\pm 2.1~10^{-3})$& 
			$0.177~(\pm 1.5~10^{-3})$&
			$0.160~(\pm 1.5~10^{-3})$&
			$0.140~(\pm 1.3~10^{-3})$\\
		\bottomrule
	\end{tabular}
	\end{center}
	\label{tab:score_tps}
\end{table*}

\begin{table} \scriptsize
\caption{Performance with NIPS and Last.fm datasets.}
\begin{center}
\begin{tabular}{lllll}
	\toprule
	& 
		\multicolumn{2}{c}{\bf NIPS} &
		\multicolumn{2}{c}{\bf Last.fm} \\
	{\bf Model} & 
		$\bf NDCG0$ & 
		$\bf NDCG1$ &
		$\bf NDCG0$ &
		$\bf NDCG1$ \\
	\hline
	\multirow{1}{*}{\bf Log} &
		$\bf 0.394$ & 
		$\bf 0.430$ &
		$\bf 0.142$ &
		$\bf 0.129$ \\
	\multirow{1}{*}{\bf ZTP} & 
		$0.381$ & 
		$0.422$ &
		$0.122$ &
		$0.113$ \\
	\multirow{1}{*}{\bf Geo} & 
		$0.390$ & 
		$0.429$ &
		$0.139$ &
		$0.128$ \\
	\multirow{1}{*}{\bf Sh. NB} & 
		$\bf 0.396$ & 
		$\bf 0.431$ &
		$\bf 0.143$ &
		$\bf 0.130$ \\
	\hline
	\multirow{1}{*}{\bf PFraw} & 
		$0.358$ & 
		$0.405$ &
		$0.091$ &
		$0.088$ \\
	\multirow{1}{*}{\bf PFbin} & 
		$0.378$ & 
		$0.392$ &
		$0.122$ &
		$0.108$ \\
	\bottomrule
\end{tabular}
\end{center}
	\label{tab:score_other}
\end{table}

Each dataset is split into a train set $\Y^\text{train}$ containing $80\%$ of the non-zero of the original dataset and a test set $\Y^\text{test}$ containing the remaining $20\%$ (these values being set to $0$ in the train set). All the compared algorithms are trained with the train set and provide a recommendation list for each user. These lists are evaluated with the test set.

For each user, we recommend an ordered list of $m$ songs he/she never listened to, based on $\Y^\text{train}$. The songs in this list are sorted w.r.t. the prediction score defined by: 
${s_{ui} = \sum_{k=1}^K\Eq{w_{uk}}\Eq{h_{ik}}}$.
Note that, in dcPF, the expected number of listening sessions is equal to $[\W\H^T]_{ui}$, expressing long-term preference (see Section~\ref{sec:model}).

The quality of the proposed list is measured by the normalized discounted cumulative gain (NDCG) score. NDCG is a metric often used in information retrieval to evaluate lists of ranked items. It is calculated as follows:
\bal{
	\operatorname{DCG}_u = \sum_{i=1}^m \dfrac{{\operatorname{rel}(u,i)}}{\log_2(i+1)},~\operatorname{NDCG}_u = \frac{\operatorname{DCG}_u}{\operatorname{IDCG}_u}, \notag
}  
where $\operatorname{DCG}_u$ is the discounted cumulative gain and $\operatorname{rel}(u,i)$ is the relevance to the ground-truth of the $i$th item of the proposed list. $\operatorname{IDCG}_u$ is the ideal DCG, i.e., the best DCG score that can be obtained. Therefore $\operatorname{NDCG}_u \in [0,1]$ with $\operatorname{NDCG}_u=1$ corresponding to the perfect recommended list.
For the relevance to the ground-truth, we propose to account for the values in the test set above a fixed threshold: $\operatorname{rel}(u,i)= \mathds{1}[y^{\text{test}}_{ui}> s]$.
As mentioned in \citep{hu_collaborative_2008}, small listening counts reflect a preference with little confidence. For example, a user can listen to a song by pure curiosity. Therefore, this threshold leads to a more robust NDCG metric. We denote by $\operatorname{NDCG\text{$s$}}$ the NDCG with the threshold $s$. If $s=0$, we recover the classic $\operatorname{NDCG0}$ metric for binary data. Otherwise, for $s\geq1$, $\operatorname{NDCG\text{$s$}}$ only considers the songs which have been listened to at least $s$ times and ignores the others. Other metrics such as precision and recall lead to similar conclusions than $\operatorname{NDCG0}$ and will not be displayed in the following.

\paragraph{Compared methods.}
For the three datasets, we compare dcPF with its limit cases: PF performed on raw (PFraw) or on binarized data (PFbin). PF is known to achieve good performance in recommendation tasks \citep{gopalan_scalable_2013}. For the Taste Profile dataset, we considered the hyper-parameters $\alpha^W=\alpha^H$ among $\{0.1,0.3,1\}$ and $K$ among $\{50,100,200\}$, and selected the values $\alpha^W=\alpha^H=0.3$ and $K=100$ which gave the best $\operatorname{NDCG0}$ for PFbin. For the NIPS and Last.fm datasets, which are smaller than the Taste Profile dataset, we only considered $\alpha^W=\alpha^H=0.3$ and $K=50$. The stopping criterion of the algorithms is set to $\tau = 10^{-5}$. Evaluation is done using a ranked list of $m=100$ items. For all the experiments, algorithms are run five times from random initializations.\footnote{Algorithms and experiments are available on github: \url{https://github.com/Oligou/dcPF}.}

\paragraph{Prediction results.} \label{sec:prediction_result}
We start by discussing results with the Taste Profile dataset, reported in Table~\ref{tab:score_tps}. A general observation is that dcPF gives better results than the two baselines PFraw and PFbin for all four metrics and every element distribution, with the exception of ZTP in the case of NDCG0. PFbin returns better scores than PFraw up to $s=5$. This confirms the usefulness of the binarization stage when using PF, but only up to a certain threshold $s$ (this is because PFbin does not fully exploit the non-zero values in the original raw data). The performance gap between dcPF and PFbin increases with the threshold $s$. On the two other datasets (NIPS and Last.fm), Table~\ref{tab:score_other} shows that dcPF outperforms the baseline methods for all element distributions. Note that for the NIPS dataset, a somehow different context from song recommendation, PFraw is effective and performs better than PFbin as soon as the threshold $s$ is larger than one. From both Tables~\ref{tab:score_tps} and \ref{tab:score_other} we conclude that the proposed shifted NB element distribution is a good compromise overall.

\paragraph{Natural parameter estimation.} As explained in Section~\ref{sec:tradeoff}, estimation of the natural parameter tells us about the level of scale information exploited by dcPF. Table~\ref{tab:score_tps} shows that dcPF indeed offers a valuable trade-off between PFraw and PFbin, because the estimated parameter $\theta$ lies in between the two limit cases $\theta^\text{raw}$ and $\theta^\text{bin}$. To assess the quality of the estimation procedures for $\theta$ in Log, ZTP and Geo described in Section~\ref{sec:tradeoff} (plainly referred as VBEM), Table~\ref{tab:score_tps} also displays evaluation metrics obtained with a grid-search. More precisely, we use $\theta=\log p$ which maximizes NDCG5 from a set of pre-specified values. For Log and Geo, $p$ is searched between $0$ and $1$ with a step of $0.1$. For ZTP, $p$ is searched in $\{0,0.1,0.5,1,2,10,100,+\infty\}$. It appears that VBEM slightly over-estimates the optimal value (in terms of NDCG5) of the natural parameter, but remains a very robust procedure.

\paragraph{Posterior predictive check (PPC).}
\begin{figure}
	\begin{center}
	\includegraphics[height=3.75cm]{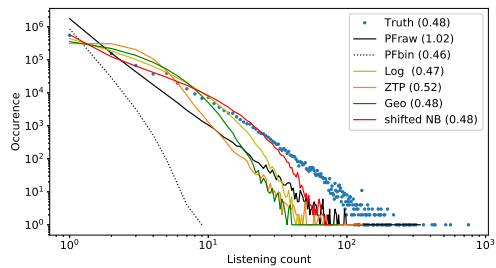}
	\end{center}
	\caption{PPC of the distribution of the non-zero values in the Taste Profile dataset. The blue points (Truth) represents the histogram of the non-zero values in the train set. The colored curves represent the simulated histograms obtained from the different inferred dcPF or PF models. The percentages of non-zero values are written in parentheses.}
	\label{fig:PPC}
\end{figure}

We provide a PPC of the distribution of the listening counts in the Taste Profile dataset (see Figure~\ref{fig:PPC}). A PPC consists in simulating a new dataset $\Y^\text{PPC}$ from the fitted model (for dcPF, we simulate from the generative process described in Section~\ref{sec:model} with latent variables $\W$, $\H$ and parameters inferred in Section~\ref{sec:prediction_result}). Then, we compare the histogram of the values of $\Y^\text{train}$ and $\Y^\text{PPC}$. 
The PPC of the two limit cases is very instructive. PFraw tries to fit the long tail of the data, but, by doing so, destroys the representation of the zero values ($1.02\%$ of non-zero values versus $0.48\%$ in the real dataset). It can explain the disappointing performances of PFraw for NDCG0. On the contrary, PFbin better fits the sparsity of the data but is not able to describe large counts. In both cases, PF struggles to properly weigh the influence of large counts compared to low counts. Comparatively, dcPF proposes a smoother weighting between large and low values. dcPF respects  both the sparsity and the long tail of the data for the four element distributions. ZTP seems to over-estimate the influence of medium counts (from 1 to 5), whereas shifted NB has the best fit to the histogram. We observe that regardless of the model, explaining the large values ($>100$) remains difficult, however we may consider that after a certain threshold the counts do not contain useful information.

\section{CONCLUSION} \label{sec:conc}

In this paper, we described new contributions to cPF for discrete data. As compared to PF, we showed that dcPF offers more flexibility to model long-tailed data. Inference remains scalable thanks to modeling of the non-zero values only. Numerical experiments confirmed that our adaptive VBEM algorithm efficiently exploits raw data, leading to better recommendation scores when compared to the two limit cases (PF on raw and binarized data). Among the four element distributions presented and experimented in this work, the proposed shifted NB prove particularly efficient thanks to its additional parameter, and often led to the best recommendation scores.

Based on this work, a number of exciting perspectives can be considered, such as investigating more complex element distributions to better fit the extreme observations or to address other forms of data. For instance, it would be of great interest to adapt the model for bounded data such as ratings, which are widespread in CF but cannot be processed with dcPF in its current form.

\paragraph{Acknowledgements.}
Supported by the European Research Council (ERC FACTORY-CoG-6681839) and the ANR-3IA (ANITI).


\appendix

\section{Appendix: Stirling Numbers}

The Stirling numbers of the three kinds are three different ways to partition $y$ elements into $n$ groups.

$\bullet$ The Stirling number of the first kind corresponds to the number of ways of partitioning $y$ elements into $n$ disjoints cycles.

$\bullet$ The Stirling number of the second kind corresponds to the number of ways of partitioning $y$ elements into $n$ non-empty subsets. 

$\bullet$ The Stirling number of the third kind (also known as Lah number) corresponds to the number of ways of partitioning $y$ elements into $n$ non-empty ordered subsets. 

\begin{figure}[h!]
\tikzset{
  font={\fontsize{4pt}{6}\selectfont}}

\begin{tabular}{c|c|c}

{\bf First kind} & {\bf Second kind} & {\bf Third kind} \\
{$St_1(3,1)=2$} & {$St_2(3,1)=1$} & {$St_3(3,1)=6$} \\
&& \\

\begin{tikzpicture}
	\tikzstyle{main}=[circle, inner sep=0pt, minimum size=7pt, thick, draw =black!80, node distance = 2.5mm]
	\tikzstyle{connect}=[-latex, thick]
	\tikzstyle{box}=[rectangle, draw=black!100]

	\node[main, fill = red] (a) [] { };
	\node[main, fill = green] (b) [right=of a] { };
	\node[main, fill = blue] (c) [below=of a] { };

	\path (a) edge [connect] (b)
	(b) edge [connect] (c)
	(c) edge [connect] (a);
\end{tikzpicture}

&

\begin{tikzpicture}
	\tikzstyle{main}=[circle, inner sep=0pt, minimum size=7pt, thick, draw =black!80, node distance = 2.5mm]
	\tikzstyle{connect}=[-latex, thick]
	\tikzstyle{box}=[rectangle, draw=black!100]

	\node[main, fill = red] (a) [] { };
	\node[main, fill = green] (b) [right=of a] { };
	\node[main, fill = blue] (c) [below=of a] { };

	\path (a) edge (b)
	(b) edge (c)
	(c) edge (a);
\end{tikzpicture}

&
\begin{tikzpicture}
	\tikzstyle{main}=[circle, inner sep=0pt, minimum size=7pt, thick, draw =black!80, node distance = 2.5mm]
	\tikzstyle{connect}=[-latex, thick]
	\tikzstyle{box}=[rectangle, draw=black!100]

	\node[main, fill = red] (a) [] { };
	\node[main, fill = green] (b) [right=of a] { };
	\node[main, fill = blue] (c) [below=of a] { };

	\path (a) edge [connect] (b)
	(b) edge [connect] (c);
\end{tikzpicture}
\hspace{2pt}
\begin{tikzpicture}
	\tikzstyle{main}=[circle, inner sep=0pt, minimum size=7pt, thick, draw =black!80, node distance = 2.5mm]
	\tikzstyle{connect}=[-latex, thick]
	\tikzstyle{box}=[rectangle, draw=black!100]

	\node[main, fill = red] (a) [] { };
	\node[main, fill = green] (b) [right=of a] { };
	\node[main, fill = blue] (c) [below=of a] { };

	\path (b) edge [connect] (a)
	(c) edge [connect] (b);
\end{tikzpicture}
\\
\begin{tikzpicture}
	\tikzstyle{main}=[circle, inner sep=0pt, minimum size=7pt, thick, draw =black!80, node distance = 2.5mm]
	\tikzstyle{connect}=[-latex, thick]
	\tikzstyle{box}=[rectangle, draw=black!100]

	\node[main, fill = red] (a) [] { };
	\node[main, fill = green] (b) [right=of a] { };
	\node[main, fill = blue] (c) [below=of a] { };

	\path (b) edge [connect] (a)
	(c) edge [connect] (b)
	(a) edge [connect] (c);
\end{tikzpicture}
& &
\begin{tikzpicture}
	\tikzstyle{main}=[circle, inner sep=0pt, minimum size=7pt, thick, draw =black!80, node distance = 2.5mm]
	\tikzstyle{connect}=[-latex, thick]
	\tikzstyle{box}=[rectangle, draw=black!100]

	\node[main, fill = red] (a) [] { };
	\node[main, fill = green] (b) [right=of a] { };
	\node[main, fill = blue] (c) [below=of a] { };

	\path (b) edge [connect] (c)
	(c) edge [connect] (a);
\end{tikzpicture}
\hspace{2pt}
\begin{tikzpicture}
	\tikzstyle{main}=[circle, inner sep=0pt, minimum size=7pt, thick, draw =black!80, node distance = 2.5mm]
	\tikzstyle{connect}=[-latex, thick]
	\tikzstyle{box}=[rectangle, draw=black!100]

	\node[main, fill = red] (a) [] { };
	\node[main, fill = green] (b) [right=of a] { };
	\node[main, fill = blue] (c) [below=of a] { };

	\path (c) edge [connect] (b)
	(a) edge [connect] (c);
\end{tikzpicture}
\\
& &
\begin{tikzpicture}
	\tikzstyle{main}=[circle, inner sep=0pt, minimum size=7pt, thick, draw =black!80, node distance = 2.5mm]
	\tikzstyle{connect}=[-latex, thick]
	\tikzstyle{box}=[rectangle, draw=black!100]

	\node[main, fill = red] (a) [] { };
	\node[main, fill = green] (b) [right=of a] { };
	\node[main, fill = blue] (c) [below=of a] { };

	\path (a) edge [connect] (b)
	(c) edge [connect] (a);
\end{tikzpicture}
\hspace{2pt}
\begin{tikzpicture}
	\tikzstyle{main}=[circle, inner sep=0pt, minimum size=7pt, thick, draw =black!80, node distance = 2.5mm]
	\tikzstyle{connect}=[-latex, thick]
	\tikzstyle{box}=[rectangle, draw=black!100]

	\node[main, fill = red] (a) [] { };
	\node[main, fill = green] (b) [right=of a] { };
	\node[main, fill = blue] (c) [below=of a] { };

	\path (b) edge [connect] (a)
	(a) edge [connect] (c);
\end{tikzpicture}

\end{tabular}
\caption{Illustration of the Stirling numbers of the three kinds for $y=3$ and $n=1$.}
\label{fig:stirling}
\end{figure}
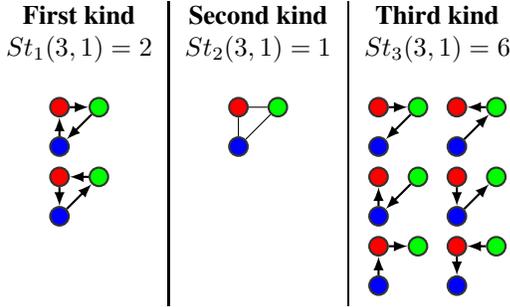

\section{Appendix: Proof of limit cases}

\begin{prop}
	If there exists $\theta^\text{raw}$ such that $\lim_{\theta\to\theta^\text{raw}}\kappa^T \psi(\theta) = -\infty$, then the posterior of dcPF tends to the posterior of PF as $\theta$ goes to $\theta^\text{raw}$.
\end{prop}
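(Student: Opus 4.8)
The plan is to sum out the exposition matrix $\N$ to obtain the marginal likelihood $p(\Y\mid\W,\H;\theta)$, and to show that as $\theta\to\theta^\text{raw}$ it factorizes into the Poisson likelihood on the raw data times a multiplicative constant that is independent of $(\W,\H)$. Since this constant cancels in the posterior normalization, the dcPF posterior then converges to the PF posterior.

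Writing $\lambda_{ui}=[\W\H^T]_{ui}$ and $a_\theta=e^{-\kappa^T\psi(\theta)}$, I would first marginalize $n_{ui}$. Using $y_{ui}=0\Leftrightarrow n_{ui}=0$, the bound $n_{ui}\le y_{ui}$, the prior $n_{ui}\sim\operatorname{Poisson}(\lambda_{ui})$ and $p(y_{ui}\mid n_{ui})=\exp(y_{ui}\theta-n_{ui}\kappa^T\psi(\theta))h(y_{ui},n_{ui}\kappa)$, I obtain for $y_{ui}>0$
\[
p(y_{ui}\mid\W,\H;\theta)=e^{y_{ui}\theta}e^{-\lambda_{ui}}\sum_{n=1}^{y_{ui}}\frac{(\lambda_{ui}a_\theta)^{n}h(y_{ui},n\kappa)}{n!},
\]
and $p(y_{ui}=0\mid\W,\H)=e^{-\lambda_{ui}}$. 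The hypothesis $\kappa^T\psi(\theta)\to-\infty$ gives $a_\theta\to+\infty$, so the polynomial in $a_\theta$ inside the sum is dominated by its top-degree term $n=y_{ui}$. Factoring it out yields
\[
p(y_{ui}\mid\W,\H;\theta)=\frac{\lambda_{ui}^{y_{ui}}e^{-\lambda_{ui}}}{y_{ui}!}\,C_\theta(y_{ui})\,B^{ui}_\theta,
\]
where $C_\theta(y_{ui})=e^{y_{ui}\theta}a_\theta^{y_{ui}}h(y_{ui},y_{ui}\kappa)$ is independent of $(\W,\H)$, the first factor is exactly $\operatorname{Poisson}(y_{ui};\lambda_{ui})$, and the correction $B^{ui}_\theta=1+\sum_{n=1}^{y_{ui}-1}c^{ui}_{n}(\lambda_{ui}a_\theta)^{n-y_{ui}}$ has nonnegative constants $c^{ui}_{n}$. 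Since each exponent $n-y_{ui}$ is strictly negative and $a_\theta\to\infty$, we have $B^{ui}_\theta\to1$ pointwise in $(\W,\H)$. Hence $p(\Y\mid\W,\H;\theta)=\big[\prod_{ui}C_\theta(y_{ui})\big]L_{\mathrm{PF}}(\W,\H)B_\theta(\W,\H)$ with $L_{\mathrm{PF}}=\prod_{ui}\operatorname{Poisson}(y_{ui};\lambda_{ui})$ and $B_\theta=\prod_{ui}B^{ui}_\theta\to1$; the $(\W,\H)$-independent prefactor cancels in $p_\theta(\W,\H\mid\Y)=\ell_\theta\,p(\W,\H)/\!\int\ell_\theta\,p$, where $\ell_\theta=L_{\mathrm{PF}}B_\theta$.

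The main obstacle is not this pointwise limit but justifying that it survives the posterior normalization, i.e.\ that $\int L_{\mathrm{PF}}B_\theta\,p\to\int L_{\mathrm{PF}}\,p$. I would establish this by dominated convergence. For $\theta$ close enough to $\theta^\text{raw}$ we have $a_\theta\ge1$, and since the exponents are negative this gives the $\theta$-uniform bound $B^{ui}_\theta\le1+\sum_{n<y_{ui}}c^{ui}_{n}\lambda_{ui}^{n-y_{ui}}=:\bar B^{ui}$. The crucial point is that $L_{\mathrm{PF}}\bar B$ is, coordinate by coordinate, a finite linear combination of terms $\lambda_{ui}^{n}e^{-\lambda_{ui}}$ with $n\ge0$, because the negative powers of $\lambda_{ui}$ from $\bar B^{ui}$ cancel exactly against the factor $\lambda_{ui}^{y_{ui}}$ in $\operatorname{Poisson}(y_{ui};\lambda_{ui})$. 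Such polynomial-times-exponential expressions are integrable against the gamma priors on $\W$ and $\H$, which have finite moments of every order, so $L_{\mathrm{PF}}\bar B\,p$ is a $\theta$-independent integrable dominating function. Applying the same bound to $\int_A$ for an arbitrary measurable set $A$ yields convergence of the posterior, giving $\lim_{\theta\to\theta^\text{raw}}p_\theta(\W,\H\mid\Y)=p_{\mathrm{PF}}(\W,\H\mid\Y)$, as claimed.
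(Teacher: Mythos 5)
Your proof is correct, and it takes a recognizably different route from the paper's even though both hinge on the same elementary asymptotic fact, namely that in the polynomial $\sum_{n=1}^{y} r^n h(y,n\kappa)/n!$ with $r=\lambda e^{-\kappa^T\psi(\theta)}\to+\infty$ the top-degree term $n=y$ dominates. The paper never forms the marginal likelihood: it shows from this dominance that the conditional posterior $p(n_{ui}\mid y_{ui},\W,\H)$ collapses to $\delta_{y_{ui}}$, then writes $p(\N|\Y)=\int p(\N|\Y,\W,\H)\,p(\W,\H|\Y)\,d\W d\H \to \delta_{\Y}(\N)$ and finally $p(\W,\H|\Y)=\sum_{\N}p(\W,\H|\N)\,p(\N|\Y)\to p(\W,\H|\N=\Y)$, exploiting that $p(\W,\H|\N)$ is $\theta$-free and that $\N$ ranges over a finite set so the last limit exchange is trivial. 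You instead sum out $\N$ to get $p(\Y|\W,\H;\theta)$, peel off a $(\W,\H)$-independent prefactor $C_\theta$ that cancels in Bayes' rule, and show the remaining factor converges to the PF likelihood with a correction $B_\theta\to 1$; the payoff is that you explicitly control the posterior normalizing integral by dominated convergence (using that the negative powers of $\lambda_{ui}$ in $\bar B^{ui}$ are absorbed by $\lambda_{ui}^{y_{ui}}$ and that gamma priors have all moments), a step the paper's proof leaves implicit when it passes limits through integrals that still depend on $\theta$. Your version is therefore somewhat more rigorous, at the cost of more bookkeeping; the paper's is shorter because the finiteness of the support of $\N$ given $\Y$ does most of the work. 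The only detail worth adding to yours is the observation that $h(y,y\kappa)>0$ (every $y$ admits the partition into $y$ singletons), which licenses dividing by it when you factor out the top term.
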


\begin{prop}
	If there exists $ \theta^\text{bin}$ such that $\lim_{\theta\to\theta^\text{bin}}\kappa^T \psi(\theta) = +\infty$, then the posterior of dcPF tends to the posterior of PF applied to binarized data as $\theta$ goes to $\theta^\text{bin}$, i.e.: 
		${\lim_{\theta\to\theta^\text{bin}} p(\W,\H|\Y) = p(\W,\H|\N=\Y^b)}$.
\end{prop}

\begin{proof}
	Let $\lambda\in\mathbb{R}_+$, $n\sim\operatorname{Poisson}(\lambda)$
	and $y|n\sim ED(\theta,n\kappa)$ with support given by $S = \{n,\dotsc,+\infty\}$: 
	\bal{
		&p(n|\lambda) = \dfrac{\lambda^n e^{-\lambda}}{n!}, \\
		&p(y|n) = \exp(y\theta- n\kappa^T\psi(\theta))h(y,n\kappa),~y\in S,
	} 
	where $\kappa$ and $\psi(\theta)$ can either be scalars or vectors of the same dimension. In both cases, $\kappa^T\psi(\theta) \in \mathbb{R}$. We denote by $r=\lambda e^{-\kappa^T \psi(\theta)}$. 

	We have the following posterior distribution for $y>0$:
	\bal{
		&p(n|y) = \dfrac{r^n h(y,n\kappa)(n!)^{-1}}{\sum_{m=1}^{y} r^m h(y,m\kappa)(m!)^{-1}},~n\in\{1,\dotsc,y\}.
	}
	Thus, for fixed $\kappa$ and $y>0$, we have that:
	\bal{
		\sum_{m=1}^{y} r^m h(y,m\kappa)(m!)^{-1} 
		&\underset{r\to+\infty}{\sim} r^y h(y,y\kappa)(y!)^{-1}\\
		&\underset{r\to 0 }{\sim} r h(y,\kappa).
	}
	It follows:
	\bal{
		&p(n|y) \xrightarrow[r\to+\infty]{} \delta_y(n) \\
		&p(n|y) \xrightarrow[r\to 0]{} \delta_1(n).
	}

	From these results we can deduce that, in dcPF, assuming:

	$\bullet$  there exists $\theta^\text{raw}$ such that $\lim_{\theta\to\theta^\text{raw}}\kappa^T \psi(\theta) = -\infty$,

	$\bullet$  there exists $\theta^\text{bin}$ such that $\lim_{\theta\to\theta^\text{bin}}\kappa^T \psi(\theta) = + \infty$.

	Then, we have the following limit cases:
	\bal{
		&p(\N|\Y) =\int\limits_{\W,\H} p(\N|\Y,\W,\H)p(\W,\H|\Y) d\W d\H \notag\\
		& \xrightarrow[\theta\to\theta^\text{raw}]{} \int\limits_{\W,\H} \delta_{\Y}(\N)~p(\W,\H|\Y) d\W d\H = \delta_{\Y}(\N) \notag\\
		& \xrightarrow[\theta\to\theta^\text{bin}]{} \int\limits_{\W,\H} \delta_{\Y^b}(\N)~p(\W,\H|\Y) d\W d\H =  \delta_{\Y^b}(\N).
	}
	And finally, for the posterior distribution: 
	\bal{
		p(\W,\H|\Y)&=\int\limits_{\N} p(\W,\H|\N)p(\N|\Y)d\N \\
		& \xrightarrow[\theta\to\theta^\text{raw}]{} p(\W,\H|\N=\Y) \\
		& \xrightarrow[\theta\to\theta^\text{bin}]{} p(\W,\H|\N=\Y^b),
	}
	where $p(\W,\H|\N)$ is the posterior of a PF model with raw or binarized observations respectively.
\end{proof}

\section{Appendix: Adaptivity of dcPF to over-dispersion}

\begin{table}[h] \footnotesize
 \caption{Mean, variance and ratio var/mean of the non-zero values for each dataset.
 Learned parameters for each model and each dataset.}
 \begin{center}
 \begin{tabular}{llll} 
  \toprule
   & \textbf{Taste Profile} & \textbf{NIPS}  & \textbf{Last.fm} \\
  \midrule
  mean of non-zeros & $2.66$ & $2.74$ & $3.86$ \\
  var of non-zeros & $25.94$ & $20.87$ & $65.72$ \\
  ratio var/mean & $9.8$ & $7.6$ & $17.0$ \\
  \midrule
  Log - $p$ & $0.80$ & $0.74$ & $0.90$ \\
  ZTP - $p$ & $1.95$ & $1.40$ & $2.35$ \\
  Geo - $p$ & $0.60$ & $0.51$ & $0.69$ \\
  sh. NB - $p$ & $0.87$ & $0.86$ & $0.90$ \\
  sh. NB - $\kappa_2$ & $0.21$ & $0.17$ & $0.27$ \\
  \bottomrule
 \end{tabular}
 \end{center}
 \label{tab:dispersion property}
\end{table}

Table~\ref{tab:dispersion property} illustrates how the natural parameter $\theta=\log(p)$ is strongly correlated to the variance-mean ratio of the non-zero values of the datasets. Hence, it illustrates the adaptivity of dcPF to over-dispersion.

\subsubsection*{References}

\bibliographystyle{ACM-Reference-Format}
\bibliography{2019_UAI}

\end{document}